\theoremstyle{plain}
\newtheorem{theorem}{Theorem}
\newtheorem{lemma}[theorem]{Lemma}
\theoremstyle{definition}
\newtheorem{definition}[theorem]{Definition}
\theoremstyle{remark}
\newtheorem{remark}[theorem]{Remark}
\renewcommand{\epsilon}{\varepsilon}
\newcommand{\twr}[2]{2^{#1}_{#2}}
\newcommand{\blank}{\square}
\newcommand{\qin}{q_{\text{init}}}
\newcommand{\qyes}{q_{\text{acc}}}
\newcommand{\qno}{q_{\text{rej}}}
\newcommand{\grtype}{\bullet}
\newcommand{\ord}[1]{\mathit{ord}(#1)}
\newcommand{\vars}{\mathcal{V}}
\newcommand{\hop}{\ensuremath{\text{HO}\text{+}{\text{PFP}}\xspace}}
\newcommand{\HOP}[1]{\ensuremath{\textsf{HO}^{#1}\text{+}\textsf{PFP}}\xspace}
\newcommand{\pfp}{\text{PFP}}
\newcommand{\Transsys}{\ensuremath{{T}}}
\newcommand{\States}{\ensuremath{{S}}}
\newcommand{\Prop}{\ensuremath{\mathbf{P}}}
\newcommand{\Act}{\ensuremath{\mathbf{A}}}
\newcommand{\Actions}{\ensuremath{A}}
\newcommand{\Label}{\ensuremath{\ell}}
\newcommand{\state}{\ensuremath{s}}
\newcommand{\statealt}{\ensuremath{t}}
\newcommand{\HOPFP}{\hopfp{}}
\newcommand{\hopfp}[1]{\ensuremath{\textsf{HO}^{#1}\text{+}\textsf{PFP}}\xspace}
\newcommand{\pfpx}[1]{\ensuremath{\text{PFP}\ #1}}
\newcommand{\sem}[3]{\ensuremath{\llbracket #1 \rrbracket^{#2}_{#3}}}
\newcommand{\semm}[2]{\sem{#1}{#2}{}}
\newcommand{\ExpTime}{\textsf{EXPTIME}\xspace}
\newcommand{\ExpSpace}{\textsf{EXPSPACE}\xspace}
\newcommand{\PTime}{\textsf{P}\xspace}
\newcommand{\PSpace}{\textsf{PSPACE}\xspace}
\def\newarrow#1{\mathop{{\hbox{\setbox0=\hbox{$\scriptstyle{#1\quad}$}{$%
\mathrel{\mathop{\setbox1=\hbox to
\wd0{\rightarrowfill}\ht1=3pt\dp1=-2pt\box1}\limits^{#1}}%
$}}}}}
\def\newarrowi#1#2{\mathop{{\hbox{\setbox0=\hbox{$\scriptstyle{#1\quad}$}{$%
\mathrel{\mathop{\setbox1=\hbox to
\wd0{\rightarrowfill}\ht1=3pt\dp1=-2pt\box1}\limits^{#1}}_{#2}%
$}}}}}
\newcommand{\Transition}[3]{\ensuremath{#1 \newarrow{#2} #3}}
\newcommand{\mytrue}{\ensuremath{\mathtt{t\!t}}}
\newcommand{\myfalse}{\ensuremath{\mathtt{f\!f}}}
\title{Characterizing the Exponential-Space Hierarchy Via Partial Fixpoints}
\author{Florian Bruse \qquad \qquad David Kronenberger \qquad\qquad Martin Lange
\institute{University of Kassel, Germany}
\email{\{florian.bruse|martin.lange\}@uni-kassel.de}
}
\begin{document}
\maketitle

\begin{abstract}
The characterization of PSPACE-queries over ordered structures as exactly those expressible
in first-order logic with partial fixpoints (Vardi'82) is one of the classical
results in the field of descriptive complexity. In this paper, we extend this result to
characterizations of $k$-EXPSPACE-queries for arbitrary $k$, characterizing them as exactly
those expressible in order-$k+1$-higher-order logic with partial fixpoints. For $k > 1$, the 
restriction to ordered structures is no longer necessary due to the high expressive power of 
higher-order logic.
\end{abstract}

\section{Introduction}
\label{sec:intro}

Computational complexity studies the difficulty of computation problems with regards to the consumption
of computational resources, most prominently time and space. Descriptive complexity, as a subdomain of
both computational complexity and formal logic, has taken this study to a more abstract level by 
characterizing classes of problems, i.e., complexity classes, through logical definability. This 
achieves the characterization of the difficulty, resp.\ complexity of problems without resorting to 
measuring the use of computational resources, as this ultimately depends on the choice of an underlying
model of computation like a Turing machine for instance. Descriptive complexity thus manages to characterize
the difficulty of problems through the structure of the problem alone, regardless of an underlying
model of computation. One can argue, though, that the resources used to measure complexity are logical 
operators that give the underlying logics their expressiveness, like predicate or fixpoint quantifiers.

Descriptive complexity started off with Fagin's seminal result \cite{Fagin74} showing that the well-known 
complexity class NP coincides with $\exists$SO, the set of problems definable in existential second-order 
logic. Stockmeyer extended this to a characterization of problems between NP and PSPACE by means of
second-order logic (SO), known as the polynomial hierarchy (PH) \cite{Stockmeyer:1976:PH}.

An interesting -- and still open -- question asks for a logical characterization of the complexity class 
P. This is believed to open ways to tackle the famous P=NP question. One of the major obstacles here is
the lack of a total order on the elements of a structure forming an instance of some computational problem,
like a graph for instance. When processing graphs with a computational model like a Turing machine, it
can be assumed to be totally ordered due to the way that it needs to be represented as an input. For
logical formulas, operating directly on structures and not on string representations thereof, this is
not the case. On the other hand, a total order helps immensely; it enables iteration over all elements
of the structure. Moreover, a logical characterization of the complexity class P is known when inputs
to its problems are assumed to be explicitly ordered. This is known as the Immerman-Vardi Theorem
\cite{DBLP:journals/iandc/Immerman86,DBLP:conf/stoc/Vardi82}, 
stating that the complexity class P on ordered structures is captured by the extension of first-order
logic with least fixpoint quantifiers (FO+LFP). 
 
Fixpoint quantifiers turned out to be a useful tool in descriptive complexity. Immerman lifted the
Immerman-Vardi Theorem to a characterization of the complexity class EXPTIME by second-order logic
with least fixpoint quantifiers (SO+LFP) \cite{Imm:lanccc}. Note that the fixpoint quantifiers in 
SO+LFP are not the same as the ones in FO+LFP. The LFP in FO+LFP refers to least fixpoints of 
first-order functions mapping tuples of elements to tuples of elements. This can be expressed in SO,
i.e.\ FO+LPF $\subseteq$ SO. The LFP in SO+LFP refers to fixpoints of second-order functions, mapping
predicates to predicates. This naturally gives rise to the question after characterizations of classes
in the exponential-time hierarchy by means of higher-order logic with fixpoints. Indeed, Freire
and Martins \cite{FREIRE201171} showed that for any $k \ge 2$, the class $k$-EXPTIME of problems 
solvable in $k$-fold exponential time is captured by HO$^{k+1}$+LFP, i.e.\ higher-order formulas of 
order at most $k+1$ with corresponding least fixpoint quantifiers.

Given the rather complete picture for time complexity, it is natural to ask whether space complexity
is also open to logical characterizations in the same fashion.
Another celebrated result in descriptive complexity, made use of in e.g., the Abiteboul-Vianu Theorem \cite{Ab-Vi}, 
is due to Vardi \cite{DBLP:conf/stoc/Vardi82} (not to be confused with the Immerman-Vardi Theorem, from 
\cite{DBLP:journals/iandc/Immerman86} and also \cite{DBLP:conf/stoc/Vardi82}). 
It states that the class PSPACE on ordered structures 
is captured by FO+PFP, i.e., the extension of first-order logic by \emph{partial} fixpoints. 

In this paper we extend the descriptive complexity of classes in the exponential space hierarchy
with the Vardi's result at the basis, just like Freire and Martins have done for the
time hierarchy with the Immerman-Vardi Theorem at its basis. We show that, for any $k \ge 1$,
the complexity class $k$-EXPSPACE of problems solvable using at most $k$-fold exponential space,
is captured by the logic HO$^{k+1}$+PFP of formulas of order at most $k+1$ with partial fixpoint
quantifiers.

\section{Preliminaries}
\label{sec:prelim}

Let $n, k \in \mathbb{N}$. We write $\twr{n}{k}$ for the following: $\twr{n}{k} = n$ if $k = 0$, and $\twr{n}{k+1} = 2^{\twr{n}{k}}$.

\subsection{Space-bounded Turing Machines} 
A deterministic Turing machine (DTM) is a tuple $\mathcal{M} = (Q, \Sigma, \Gamma, \blank, \delta, \qin, \qyes, \qno)$ where $Q$ is a finite set of \emph{states},
$\Sigma$ is a finite, nonempty \emph{input alphabet}, $\Gamma \supseteq \Sigma$ is a finite, nonempty \emph{tape alphabet},
$\blank \in \Gamma \setminus \Sigma$ is the \emph{blank symbol}, $\delta \colon Q \times \Gamma \to Q \times \Gamma \times \{L,N,R\}$
is the \emph{transition function}, and $\qin, \qyes, \qno \in Q$ are the unique \emph{starting}, \emph{accepting} and \emph{rejecting states}. 

A \emph{configuration} of a DTM is a tuple $(q,h,t)$ where $q \in Q$ is the current state, $h \in \mathbb{N}$ is the head position, and $t \colon \mathbb{N} \to \Gamma$ is the tape content. The \emph{initial configuration} on input word $w \in \Sigma^*$ is given by $(\qin, 0, t)$ with $t(i) = w_i$
if $i <|w|$ and $t(i) = \blank$ otherwise. The unique \emph{accepting} and \emph{rejecting configurations} are given by $(\qyes, 0, t)$, resp.\
$(\qno, 0, t)$ where $t(i) = \blank$ for all $i$ in both cases.

A configuration $C = (q', h', t')$ is the, necessarily unique, \emph{successor configuration} of $(q,h,t)$ if 
(\textsc{i}) $q \notin \{\qyes, \qno\}$, 
(\textsc{ii}) $\delta(q,t(h)) = (q', \gamma, D)$ for some $\gamma \in \Gamma, D \in \{L,N,R\}$, 
(\textsc{iii}) $t'(i) = \gamma$ if $i = h$ and $t'(i) = t(i)$ otherwise, and
(\textsc{iv}) $h' = h-1$ if $D=L$ and $h > 0$, $h' = h$ if $D=N$, and $h' = h+1$ if $D = R$. A (partial) \emph{computation} of $\mathcal{M}$ on input $w$ is
a finite or infinite sequence of configurations $C_0,C_1,\dotsc$ where $C_0$ is the initial configuration of $\mathcal{M}$ on $w$, and 
$C_{i+1}$, if it exists, is the successor configuration of $C_i$. Such a computation is \emph{maximal} if it is either infinite or its 
last configuration is the accepting or the rejecting configuration. 
Note that each $\mathcal{M}$ has exactly one maximal computation for each $w$, whence from now on we talk about \emph{the} computation
of $\mathcal{M}$ on $w$. 
We say that $\mathcal{M}$ \emph{accepts} $w$ if its unique maximal computation on $w$ ends with the accepting configuration, and we write $L(\mathcal{M})$ 
for the set of words accepted by $\mathcal{M}$. Conversely, $\mathcal{M}$ \emph{rejects} $w$ if its unique maximal computation on $w$ ends in the
rejecting configuration or if it is infinite. In the latter case, we say that the computation \emph{diverges}.

We say that a non-diverging computation $C_0,\dotsc,C_k$ on input some $w$ consumes space $n$, written $\text{space}_\mathcal{M}(w) = n$, if 
$n = 1 + \max\{h_i \mid C_i = (q_i, h_i, t_i)\}$. Obviously, if the head never advances beyond position $n-1$, then $t_i(j) = \blank$ for all
$0 \leq i \leq k$ and $j > n-1$.
Let $f \colon \mathbb{N} \to \mathbb{N}$ be a function. We say that $\mathcal{M}$ is $f$-space-bounded if $\mathcal{M}$ has no diverging
computations on any input and, for all $n$, we have $\max\{\text{space}_\mathcal{M}(w) \mid w \in \Sigma^n\} \leq f(n)$. We say that $\mathcal{M}$
is \emph{$k$-fold-exponential space bounded} if there is a polynomial $p(n)$ such that $\mathcal{M}$ is $\twr{p(n)}{k}$-space-bounded.

\subsection{Higher-Order Logic with Partial Fixpoints}

In order to keeps things notationally simple, we restrict ourselves to the class of labeled transition systems (LTS), or labelled
graphs. Let $\Prop = \{p,q,\dotsc\}$ be a set of \emph{propositions} and let $\Act = \{a,b,\dotsc\}$ be a set of \emph{actions} or
\emph{transition relation} or \emph{edge relations}. An LTS is a tuple $\Transsys = (\States, \Actions, \Label)$ 
where $\States = \{\state, \statealt, \dotsc\}$ 
is a finite, nonempty set of \emph{states}, $\Actions \subseteq \States \times \Act \times \States$ is the \emph{transition relation}
and $\Label \colon \States \to 2^{\Prop}$ labels each state by the set of propositions valid in it. 
We write $\Transition{\state}{a}{\statealt}$ instead of $(\state, a, \statealt) \in \Actions$.

\paragraph*{Types.}
The set of types is defined via the grammar 
\[
\tau,\tau_1,\dotsc,\tau_n \Coloneqq \grtype \mid \tau_1,\dotsc,\tau_n \mid (\tau)
\]
where $\grtype$ is the \emph{ground type} or \emph{type of individuals} of \emph{order} $\ord{\grtype} = 1$, $\tau_1,\dotsc,\tau_n$ is a 
\emph{compound type} of order $\ord{\tau_1,\dotsc,\tau_n} =  \max\{\ord{\tau_1},\dotsc,\ord{\tau_n}\}$, and where $(\tau)$
is a \emph{set type} of order $\ord{(\tau)} = 1 + \ord{\tau}$.\footnote{Compound type and set type are often combined into a single
``set of tuples'' type. We use separate operators here for ease of notation, but the results of the paper do not depend on that.} 

Given an LTS $\Transsys = (\States, \Actions, \Label)$, the semantics $\semm{\tau}{\Transsys}$ of a type $\tau$ is given by
\[\semm{\grtype}{\Transsys} = \States, \qquad\qquad \semm{\tau_1,\dotsc,\tau_n}{\Transsys} = 
\semm{\tau_1}{\Transsys} \times \dotsb \times \semm{\tau_n}{\Transsys}\qquad \qquad  \semm{(\tau)}{\Transsys} = 2^{\semm{\tau}{\Transsys}}. \] 
We often compress compound and set types by writing e.g., $(\tau_1,\dotsc,\tau_n)$.

The following is straightforward to prove by induction on the structure of types.
\begin{lemma}
\label{lem:typesize}
For any $\tau$ of order $k$ and any LTS $\Transsys$, with state set $\States$, $|\semm{\tau}{\Transsys}|$ is $k-1$-fold exponential in $|\States|$. 
\end{lemma}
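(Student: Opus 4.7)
The plan is to induct on the structure of $\tau$, proving the slightly strengthened statement that there exists a polynomial $p_\tau$, depending only on $\tau$, with $|\semm{\tau}{\Transsys}| \leq \twr{p_\tau(|\States|)}{k-1}$ where $k = \ord{\tau}$. Carrying an explicit polynomial through the recursion, rather than just the qualitative bound, is what makes the three cases compose cleanly.

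The base case $\tau = \grtype$ is immediate: $|\semm{\grtype}{\Transsys}| = |\States| = \twr{|\States|}{0}$, so $p_\tau(x) = x$ works. The set-type case is almost as easy: if $\tau = (\tau')$ has order $k = 1 + \ord{\tau'}$, the induction hypothesis gives $|\semm{\tau'}{\Transsys}| \leq \twr{p_{\tau'}(|\States|)}{k-2}$, and exponentiating with base $2$ yields $|\semm{(\tau')}{\Transsys}| = 2^{|\semm{\tau'}{\Transsys}|} \leq \twr{p_{\tau'}(|\States|)}{k-1}$, so I would reuse the same polynomial.

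The compound case $\tau = \tau_1, \dotsc, \tau_n$ is where the real work sits and is the main obstacle, because the order $k = \max_i \ord{\tau_i}$ is the maximum rather than the sum. I would first pick a common polynomial $p$ (for concreteness, $p = 1 + \sum_i p_{\tau_i}$) so that, using monotonicity of the tower in both its arguments, $|\semm{\tau_i}{\Transsys}| \leq \twr{p(|\States|)}{k-1}$ for every $i$. The product is then at most $\twr{p(|\States|)}{k-1}^n$. For $k \geq 2$ this equals $2^{n\cdot \twr{p(|\States|)}{k-2}}$ and must be bounded by $\twr{q(|\States|)}{k-1}$ for a polynomial $q$ that absorbs the constant factor $n$ at the next-lower tower level; for $k = 1$ every $\tau_i$ is $\grtype$ and the product collapses to $|\States|^n$, which is polynomial since the arity $n$ is fixed by $\tau$.

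The one routine calculation worth flagging is the step that replaces $n \cdot \twr{p(|\States|)}{k-2}$ by $\twr{q(|\States|)}{k-2}$ for a suitable polynomial $q$: for $k = 2$ this is literally $q(x) = n\, p(x)$, while for $k \geq 3$ the factor $n$ is swallowed by a modest additive bump $q(x) = p(x) + \lceil \log_2 n \rceil$ at the next tower level. Once this is in hand, the induction closes uniformly and the lemma follows.
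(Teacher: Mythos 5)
Your proof is correct and follows exactly the route the paper has in mind (the paper merely asserts the lemma is ``straightforward to prove by induction on the structure of types''); strengthening the statement to an explicit bound $|\semm{\tau}{\Transsys}| \leq \twr{p_\tau(|\States|)}{k-1}$ is the right way to make the compound case close, and your absorption of the arity factor $n$ one tower level down is sound. The only nitpick is that for $k=1$ the components $\tau_i$ need not literally be $\grtype$ (they may be nested compounds of ground type), but your induction hypothesis already gives each a polynomial bound, so the product is still polynomial and nothing breaks.
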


Given an LTS as above, and some $f \colon \semm{\tau}{\Transsys}\to\semm{\tau}{\Transsys}$ for $\tau$ of order at least $2$,
we define its \emph{partial fixpoint} \pfpx{f} via
\[
\pfpx{f} = \left\{\begin{aligned} F^i &, \text{ if } i \text{ exists such that } F^i = F^{i+1} \\ \emptyset &, \text{ otherwise,}  \end{aligned}\right.
\]
where $F^0 = \emptyset$ and $F^{i+1} = f(F^i)$. By an obvious counting argument, if there is $i$ such that a nontrivial partial
fixpoint exists, then there already is one bounded by $|\semm{\tau}{\Transsys}|$, which, by Lem.~\ref{lem:typesize}, is $k-1$-fold exponential in $|\States|$ for $\tau$
of order $k$.

\paragraph*{Syntax.}
Let $\vars = \{X,Y,\dotsc \}$ be a set of typed variables, tacitly assumed to contain infinitely many variables for each type. 
%
The set of \HOPFP-formulas is defined by the grammar
\[
\varphi  \Coloneqq \mytrue \mid p(X) \mid a(X,Y) \mid X(Y_1,\dotsc,Y_n) \mid \neg \varphi \mid \varphi \vee \varphi \mid
\exists (X\colon\tau).\ \varphi \mid (\pfp (X\colon\tau).\ \varphi)(Y_1,\dotsc,Y_n)
\]
where $p \in \Prop$, $a \in \Act$, and $X,Y_1,\dotsc,Y_n,$ are variables. A formula $\varphi$ is \emph{well-formed} if the following
are true for $\varphi$: (\textsc{i}) The variables in terms of the form $p(X)$ or $a(X,Y)$ are of type $\grtype$, and
(\textsc{ii}) in a term of the form $X(Y_1,\dotsc,Y_n)$ or $(\pfp (X\colon\tau).\ \varphi)(Y_1,\dotsc,Y_n)$, 
the variable $X$ has type $(\tau_1,\dotsc,\tau_n)$ if $Y_i$ has type $\tau_i$ for $1\leq i \leq n$. If they are not important,
we omit type annotations of the form $(X \colon \tau)$, and we use compressed notation such as $\exists (X,Y,Z\colon \grtype).\, \varphi$
or $\exists (X,Y,Z \colon \tau, \tau', \tau'')$ where appropriate.

Other derived operators such as $\wedge, \rightarrow, \forall, \myfalse$ etc.\ can be added in the usual way. The notions
of subformula, formula size etc.\ are also standard. Free and bound variables are defined as usual, with $X$ being
a bound variable in $(\pfp (X\colon\tau).\ \varphi)(Y_1,\dotsc,Y_n)$. We use notation such as $\varphi (X\colon \tau, Y\colon \tau')$ etc.\
to communicate the names and types of the free variables of a formula, with shorthands as above used if appropriate.

We say that $\varphi$ has order $k$ if the highest order of a variable that occurs freely or as $X$ in a formula of the 
form $\exists X.\ \psi$ is at most $k$, and the highest order of a variable $X$ in a subformula of the form
$(\pfp (X\colon\tau).\ \varphi)(Y_1,\dotsc,Y_n)$ is at most $k+1$. We write \HOP{k} for the collection of all formulas of order at most $k$.

\paragraph*{Semantics.}
Let $\Transsys = (\States, \Actions, \Label)$ be an LTS. A variable assignment $\eta$ is a function that assigns, to each
variable $X \in \vars$ of type $\tau$, an element of $\semm{\tau}{\Transsys}$. Given some $X$ of type $\tau$ and some
$f \in \semm{\tau}{\Transsys}$, the update $\eta[X \mapsto f]$ is defined as $\eta[X\mapsto f](X) = f$ and $\eta[X\mapsto f](Y) = \eta(Y)$
if $Y \not = X$.

The semantics of a \hop\ formula is defined as follows:
\begin{align*}
\Transsys, \eta \models \mytrue &\text{, always} \\
\Transsys, \eta \models p(X) &\text{, iff } p \in \Label(\eta(x)) \\
\Transsys, \eta \models a(X,Y) &\text{, iff } \Transition{\eta(X)}{a}{\eta(Y)} \\
\Transsys, \eta \models X(Y_1,\dotsc,Y_n) &\text{, iff } (\eta(Y_1),\dotsc,\eta(Y_n)) \in \eta(X) \\
\Transsys, \eta \models \neg \varphi &\text{, iff } \Transsys, \eta \not\models  \varphi\\
\Transsys, \eta \models \varphi_1 \vee \varphi_2 &\text{, iff } \Transsys, \eta \models \varphi_1 \text { or } \Transsys, \eta \models \varphi_2 \\
\Transsys, \eta \models \exists (X\colon\tau). \varphi &\text{, iff } \text{ there is } f \in \semm{\tau}{\Transsys} \text{ s.t.\ } \Transsys, \eta[X \mapsto f] \models \varphi \\
\Transsys, \eta \models (\pfp (X\colon\tau) \varphi)(Y_1,\dotsc,Y_n) &\text{, iff } (\eta(Y_1),\dotsc,\eta(Y_n)) \in \pfpx{\varphi^\Transsys_\eta}
\end{align*}
where $\varphi^\Transsys_\eta$ is the function that maps $g \in \semm{\tau}{\Transsys}$ to 
\[\{(g_1,\dotsc,g_n) \in 2^{\semm{\tau_1}{\Transsys} \times \dotsb \times \semm{\tau_n}{\Transsys}} \mid 
\Transsys,\eta[X \mapsto g, Y_1 \mapsto g_1,\dotsc,Y_n\mapsto g_n] \models \varphi \} \in \semm{\tau}{\Transsys}
\]
if $\tau = (\tau_1,\dotsc,\tau_n)$.
                         
\subsection{Queries}

Let $\Prop$ and $\Act$ be fixed. A (boolean) \emph{query} (over $\Prop$ and $\Act$) is a function $\mathcal{Q}$ that maps, to each finite LTS 
$\Transsys = (\States, \Actions, \Label)$ a truth value, i.e., either true or false. Alternatively, such a boolean query is just
a set of finite LTS over (over $\Prop$ and $\Act$), which we shall identify with $\mathcal{Q}$.

A closed \HOPFP formula $\varphi$ naturally defines a query via 
\[
\mathcal{Q}_\varphi = \{\Transsys \mid \Transsys  \models \varphi \}.
\]

Conversely, queries can be decided by space-bounded Turing machines. For this, the machine receives the LTS in question 
as an input, and either accepts or rejects. The LTS has to be encoded into some word of the input alphabet for this. Naturally,
this introduces a total order on the set of states of the LTS. It is known that e.g., the expressive power of first-order
logic increases in the presence of an order (this is a classic exercise when introducing Ehrenfeucht-Fra\"issé games).
However, order is not an issue in our setting. The classical first-order characterization due to Abiteboul and Vianu is explicitly 
restricted to ordered structures, and characterizations for logics beyond existential second-order logic can be done with an
order in mind, as existential second-order logic is strong enough to simply guess an order. 
This includes \hopfp{k} for $k \geq 2$, i.e., the topic of this paper.

Given an LTS $\Transsys$,  let $\langle \Transsys \rangle$
be some form of polynomial encoding of $\Transsys$ into a given input alphabet $\Sigma$, e.g., using adjacency matrices or the like.
We say that a Turing machine $\mathcal{M}$ decides a query $\mathcal{Q}$ if $\mathcal{M}$ halts on any input of the form $\langle \Transsys\rangle$,
where $\Transsys$ is necessarily finite, and accepts exactly those codings where $\Transsys \in \mathcal{Q}$. 
A query is a $k$-\ExpSpace-query if there is $\mathcal{M}$ that is $k$-fold-exponential space bounded and decides $\mathcal{Q}$.

We now say that a logic $\mathcal{L}$ captures a complexity class $\mathcal{C}$ over a class of structures (LTS) $\mathcal{S}$ if,
for each $\mathcal{L}$-query there is a $\mathcal{C}$-query that yields the same set when restricted to $\mathcal{S}$, and
vice versa. 

\begin{remark}
Non-boolean queries are quite common in e.g., the field of database theory. A $d$-query is then not a function that maps an LTS to a
truth value, but rather one that maps an LTS and a $d$-tuple of states to a truth value, or, equivalently, maps every LTS to 
a set of $d$-tuples. On the logical side, one now deals with formulas with free first-order variables.
We choose to stick to boolean queries here in order to avoid the extra coding required to get said free variables encoded into 
DTM.
\end{remark}

\subsection{Vardi's Characterization of PSPACE}

We briefly sketch the classical result due to Vardi \cite{DBLP:conf/stoc/Vardi82} that first-order logic with partial fixpoints, i.e.,
$\hopfp{1}$, captures \PSpace over the class of ordered LTS. One direction is rather straightforward since
first-order queries can be evaluated in polynomial \emph{time}, the individual stages of a partial fixpoint
only take polynomial space, and the next stage can be computed from the previous
one also in polynomial time. Since such a partial fixpoint either does not stabilize, or stabilizes after at most
exponentially many iterations, it is sufficient to keep a counter for the number of iterations, which takes
polynomially many bits if it is encoded in binary. 

For the other direction, let $\mathcal{M} = (Q, \Sigma, \Gamma, \blank, \delta, \qin, \qyes, \qno)$ be a $p(n)$-space-bounded DTM
that decides a query $\mathcal{Q}$ over LTS, i.e., it accepts those $w = \langle \Transsys \rangle$ such that $\Transsys \in \mathcal{Q}$.

Since $\mathcal{M}$ is $p(n)$-space-bounded, the tape contents 
and the head position of each configuration of a computation of $\mathcal{M}$ on an input of length $n$ can be represented
by a number of at most $p(n)$ and a word of length $p(n)$ over the tape alphabet of $\mathcal{M}$.  The proof rests
on three key observations:
\begin{itemize}
 \item In sufficiently large, ordered LTS, a configuration of $\mathcal{M}$ can be represented as a second-order relation of
       sufficient arity,
 \item the operator that computes from such a representation of a configuration its successor configuration, if it exists,
       can be expressed as a first-order formula, and
 \item the initial and accepting configurations can be pinned down using first-order logic.
\end{itemize}
The capturing result is then obtained by observing that $\mathcal{M}$ accepts its input $w$, derived from $\Transsys$, 
iff the partial fixpoint obtained
by feeding a representation of the initial configuration into the operator mentioned above is nonempty and contains exactly
a representation of the accepting configuration.

\section{\texorpdfstring{\hopfp{k}}{HO-k+PFP}-Queries are in \texorpdfstring{$k-1$-\ExpSpace}{k-1-EXPSPACE}}
\label{sec:upper}

We begin with the simpler part of the capturing result. We will show that queries definable in \hopfp{k} can be
evaluated using at most $(k-1)$-fold exponential space. This does not even need any special tricks. 
Alg.~\ref{alg:eval} essentially just computes the semantics of an \hopfp{k} query $\varphi$ w.r.t.\ an LTS
$\Transsys$ and a variable evaluation $\eta$, i.e., it decides whether or not $\Transsys, \eta \models \varphi$
holds. 

\begin{algorithm}
\begin{algorithmic}[1]
\Procedure{Eval}{$\Transsys,\eta,\varphi$}
  \Cases{$\varphi$}
    \Case{$\mytrue$} \Return \texttt{true}
    \Case{$p(X)$} \Return $p \in \ell(\eta(X))$
    \Case{$a(X,Y)$} \Return $\Transition{\eta(X)}{a}{\eta(Y)}$
    \Case{$X(Y_1,\dotsc,Y_n)$} \Return $(\eta(Y_1),\dotsc,\eta(Y_n)) \in \eta(X)$    
    \Case{$\neg \psi$} \Return $\neg \textsc{Eval}(\Transsys,\eta,\psi)$    
    \Case{$\psi_1 \vee \psi_2$} \Return $\textsc{Eval}(\Transsys,\eta,\psi_1) \vee \textsc{Eval}(\Transsys,\eta,\psi_2)$    
    \Case{$\exists (X\colon\tau).\ \psi$}
      \ForAll{$f \in \sem{\tau}{\Transsys}{}$}
        \If{$\textsc{Eval}(\Transsys,\eta[X \mapsto f],\psi)$}
          \State \Return \texttt{true}
        \EndIf
      \EndFor
      \State \Return \texttt{false}
    \Case{$(\pfp (X\colon(\tau_1,\dotsc,\tau_k)).\ \psi)(Y_1,\dotsc,Y_n)$}
      \State{$f \gets \emptyset$} 
      \State{$\mathit{cnt} \gets 0$}
      \While{$\mathit{cnt} < |\sem{(\tau_1,\dotsc,\tau_k)}{\Transsys}{}|$}
        \State{$f' \gets f$}
				\State{$f \gets \emptyset$}
        \ForAll{$(M_1,\dotsc,M_k) \in \semm{\tau_1}{\Transsys} \times \dotsb \times \semm{\tau_k}{\Transsys}$}
          \If{$\Transsys, \eta[Y_1 \mapsto M_1,\dotsc,Y_k \mapsto M_k, X \mapsto f'] \models \psi$}
					  \State{$f \gets f \cup \{ (M_1,\dotsc,M_k)\}$}
						\EndIf
        \EndFor
        \If{$f = f'$}
          \State{\Return $(\eta(Y_1),\dotsc,\eta(Y_n)) \in f$}
        \EndIf
        \State{$\mathit{cnt} \gets \mathit{cnt} + 1$}
      \EndWhile
      \State \Return \texttt{false}
  \EndCases
\EndProcedure
\end{algorithmic}
\caption{Evaluating \hopfp{k} queries in $(k-1)$-fold exponential space.}
\label{alg:eval}
\end{algorithm}

\begin{theorem}
\label{thm:upper}
Let $k \ge 2$. Evaluating an \hopfp{k} query is in $(k-1)$-\ExpSpace.
\end{theorem}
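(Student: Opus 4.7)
The plan is to argue that Alg.~\ref{alg:eval} correctly decides $\Transsys, \eta \models \varphi$ and runs in $(k-1)$-fold exponential space in $|\States|$, which is polynomial in the input length. Correctness follows by routine structural induction on $\varphi$, since each branch of the case distinction is a transcription of the corresponding semantic clause. In the $\pfp$-branch the inner loop computes the iterative sequence $F^0 = \emptyset, F^1, F^2, \dotsc$ of $\varphi^\Transsys_\eta$ in the variables $f'$ and $f$, and exits either at the first collision $f = f'$ (yielding the partial fixpoint $F^i$) or once $cnt$ reaches $|\semm{(\tau_1,\dotsc,\tau_k)}{\Transsys}|$, in which case no fixpoint can ever be reached and $\pfpx{\varphi^\Transsys_\eta} = \emptyset$. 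The bound on $cnt$ is justified by the counting argument following the definition of $\pfpx{f}$.

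For the space bound I would combine Lem.~\ref{lem:typesize} with the order restrictions imposed by $\varphi \in \hopfp{k}$. A single value of a variable of type $\tau$ can be encoded as a bit string of length at most $\log |\semm{\tau}{\Transsys}|$, which by Lem.~\ref{lem:typesize} is at most $(k-1)$-fold exponential in $|\States|$ whenever $\tau$ has order at most $k+1$. Since the quantified variables in $\varphi$ have order at most $k$ and the $\pfp$-bound variables at most $k+1$, every value stored by \textsc{Eval}, including each $f$, $f'$, $g$, and the tuple components $M_i$, fits in $(k-1)$-fold exponential bits. The same bound applies to the counter $cnt$, which must range up to at most $\twr{|\States|}{k}$ and hence uses $\twr{|\States|}{k-1}$ bits in binary. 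Since the recursion depth of \textsc{Eval} is bounded by $|\varphi|$, a constant in the input, the total space consumption is $O(|\varphi|) \cdot \twr{|\States|}{k-1}$, i.e., $(k-1)$-fold exponential in $|\States|$ and hence in the input length.

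The only non-automatic point is the enumeration in the $\exists$- and $\pfp$-branches, where \textsc{Eval} iterates over entire type universes. These iterations must never materialise $\semm{\tau}{\Transsys}$ on the tape; instead I would implement them by successively incrementing a binary representation of a single candidate value (or tuple) in place, reusing the same tape cells across iterations. With that convention the per-frame tape usage is bounded by a constant number of blocks of $(k-1)$-fold exponential size, and the remaining analysis is purely mechanical. This bookkeeping is the only point requiring any care; there is no deeper obstacle to expect.
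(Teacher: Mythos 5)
Your proposal is correct and follows essentially the same route as the paper's proof: correctness by matching the algorithm's cases against the semantic clauses, recursion depth bounded by $|\varphi|$, per-frame space dominated by single elements of type universes and binary counters (each $(k-1)$-fold exponential via Lemma~\ref{lem:typesize}), and in-place enumeration by successor computation rather than materialising the universes. The only cosmetic differences are that you make the bit-length accounting and the in-place enumeration convention slightly more explicit than the paper does.
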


\begin{proof}
It is not hard to see that algorithm \textsc{Eval} correctly evaluates an \hopfp{} query, as it closely follows
the semantics of \hopfp{}. It remains to be seen that the space needed by this procedure is bounded by a function
that is at most $(k-1)$-fold exponential in the size of the underlying $|\Transsys|$ and $|\varphi|$.

First note that the recursion depth in \textsc{Eval} is bounded by $|\varphi|$. Hence, it suffices to check that
the space needed within each recursive call is bounded in this way. It is only the last two cases in which this
may not be obvious. So consider the case of $\varphi = \exists (X\colon\tau).\ \psi$. Enumerating all elements
of $\sem{\tau}{\Transsys}{}$ requires space for one of these elements, plus space either for a counter to abort
the enumeration after all elements have been constructed, or for a second of these elements in case the enumeration
can construct, from one of these elements, a uniquely determined successor (in a lexicographic ordering for instance).
In both cases, the space needed is logarithmic in $|\sem{\tau}{\Transsys}{}|$ which is at most $(k-1)$-fold 
exponential in $|\Transsys|$ according to Lemma~\ref{lem:typesize}.

The argument for the last case of $\varphi = (\pfp (X\colon(\tau_1,\dotsc,\tau_k)) \psi)(Y_1,\dotsc,Y_n)$ is similar.
We write $\tau$ for $(\tau_1,\dotsc,\tau_k)$. Note that the order of $\tau$ may be up to $k+1$, so 
$|\semm{\tau}{\Transsys}$ is $k$-fold exponential in $\States$.
 The space needed to evaluate the partial fixpoint formula is determined by a counter with values up to $|\semm{\tau}{\Transsys}|$
and by the two elements $f,f' \in \semm{\tau}{\Transsys}$. Using binary coding, the space needed for the counter is 
logarithmic in $|\semm{\tau}{\Transsys}$, and individual elements of $\semm{\tau}{\Transsys}$ take $k-1$-fold
exponential space, too. Hence, the space needed in this case is also at most $(k-1)$-fold exponential in $|\Transsys|$.  
\end{proof}

\section{\texorpdfstring{$k-1$-\ExpSpace}{k-1-EXPSPACE}-Queries are Expressible in \texorpdfstring{\hopfp{k}}{HO-k+PFP}}
\label{sec:lower}

\subsection{Ordering Higher-Order Relations}

Since we want to encode runs of $k$-fold-exponentially space-bounded Turing machines into formulas of \hopfp{k+1}, we have to be able
to encode the tape contents of the Turing machine in question. For such a space-bounded machine, the tape can be represented
by a $\Gamma$-word of $k$-fold exponential length, where $\Gamma$ is the tape alphabet of the machine in question. Hence,
we have to be able to somehow represent such a large word or, in other words, we must be able to count to large numbers. 

Let $p(n)$ be a polynomial, for the time being one of the form $n^c$ for some $c \geq 2$.
Let $\Act$ contain a relation $<$, and let the types $\tau_0,\dotsc,\tau_k$ be the types defined via
$\tau_1 = \grtype^c$, i.e. $\grtype \times \dotsb \times \grtype$ with $c$ many repetitions of $\grtype$, and $\tau_{i+1} = (\tau_i)$.
We define formulas $\varphi_<^{1}$, $\varphi_<^2$ and $\varphi_<^{i+1}$ for $i \geq 2$ via:
\begin{align*}
\varphi_<^1(X_1,\dots,X_c, Y_1,\dotsc,Y_c \colon\grtype) &= \bigvee_{i=1}^c <(X_i,Y_i) \wedge \bigwedge_{j =  1}^{i-1} \neg <(Y_j,X_j) \\
\varphi_<^{2}(X,Y \colon \tau_1) &=  \exists (Z_1,\dotsc,Z_c\colon\grtype).\, Y(Z_1,\dotsc,Z_{c}) \wedge \neg X(Z_1,\dotsc,Z_{c})  \\
                                       &\;\qquad \qquad \wedge \forall (Z'_1,\dotsc,Z'_{c}\colon\grtype).\, \big(\varphi_<^1(Z'_1,\dotsc,Z'_{c},Z_1,\dotsc,Z_{c}) \\
                                      & \qquad \qquad \qquad \qquad \qquad \quad
																			\rightarrow X(Z'_1,\dotsc,Z'_{c}) \rightarrow Y(Z'_1,\dotsc,Z'_{c}) \\
\varphi_<^{i+1}(X,Y \colon \tau_{i+1}) &=  \exists (Z\colon\tau_i).\, Y(Z) \wedge \neg X(Z) \wedge \forall (Z'\colon\tau_i).\, \varphi_<^i(Z',Z))  \rightarrow \big( X(Z') \rightarrow Y(Z')\big) 
\end{align*}
Here, $\varphi_<^1$ defines a total order on $\semm{\grtype^c}{\Transsys}$ via the lexicographical ordering induced by $<$.
For $i \geq 1$, the formula $\varphi_<^{i+1}$ then totally orders $\semm{\tau_{i+1}}{\Transsys}$ via lexicographical ordering of sets w.r.t.\
the membership of elements of $\tau_i$.
\begin{lemma}
\label{lem:order}
Let ${<} \in \Act$ and let $\Transsys = (\States, \Actions, \Label)$ be an LTS over 
$\Act$ and some $\Prop$ such that $<$ is a total order on $\States$. 
Let $\tau_k$ for $k \geq 1$ be defined as above. Then the following are true for all $k \geq 1$:
(\textsc{i}) $|\semm{\tau_k}{\Transsys}| = \twr{|\States|}{k-1}$,
(\textsc{ii}) $\varphi_<^{k}$ defines a total order on $\semm{\tau_{k}}{\Transsys}$.
\end{lemma}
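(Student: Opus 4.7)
The plan is to prove both statements by induction on $k$, with part~(\textsc{i}) feeding into the range of quantifiers in part~(\textsc{ii}). For (\textsc{i}), the base case gives $|\semm{\tau_1}{\Transsys}| = |\States|^c$, and the inductive step uses the semantics of the set type to obtain $|\semm{\tau_{i+1}}{\Transsys}| = 2^{|\semm{\tau_i}{\Transsys}|}$; iterating this recurrence yields the $(k-1)$-fold exponential size, with the polynomial factor $c$ in the base absorbed into the tower at higher levels.

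For (\textsc{ii}), I would show that $\varphi_<^k$ defines the standard lexicographic order at each level. The $k=1$ case is routine: the conjunct $\bigwedge_{j<i} \neg <(Y_j, X_j)$ forbids any earlier coordinate where $Y_j$ strictly precedes $X_j$, so together with the disjunct $<(X_i, Y_i)$, the formula selects exactly those pairs whose first differing coordinate has $X_i < Y_i$, which is precisely lex-comparison of $c$-tuples.

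For the inductive step, assume $\varphi_<^i$ defines a strict total order $<_i$ on $\semm{\tau_i}{\Transsys}$, and let $<_{i+1}$ denote the intended lex order on $\semm{\tau_{i+1}}{\Transsys} = 2^{\semm{\tau_i}{\Transsys}}$, namely $X <_{i+1} Y$ iff $X \neq Y$ and the $<_i$-minimum element of $X \triangle Y$ lies in $Y$. The heart of the argument is the equivalence $\varphi_<^{i+1}(X, Y) \iff X <_{i+1} Y$. The easy direction uses $Z := \min_{<_i}(X \triangle Y)$ as witness: below $Z$ the sets $X$ and $Y$ agree, so the implication $X(Z') \rightarrow Y(Z')$ holds trivially. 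The other direction is the crux: given any witness $Z \in Y \setminus X$, one shows that $Z_0 := \min_{<_i}(X \triangle Y)$ must itself lie in $Y \setminus X$, since otherwise $Z_0 \in X \setminus Y$ with $Z_0 <_i Z$ would force $Z_0 \in Y$ via the implication, a contradiction.

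The main obstacle is precisely this last step, and it demands care because $\varphi_<^{i+1}$ uses the one-sided implication $X(Z') \rightarrow Y(Z')$ rather than the biconditional one might expect in a ``first difference'' comparison; the witness $Z$ is therefore not a priori the least differing element, and one must rule out the possibility that the genuine first difference sits on the wrong side below~$Z$. Once the equivalence with $<_{i+1}$ is in hand, irreflexivity, antisymmetry, transitivity, and totality of $\varphi_<^{i+1}$ all transfer from the standard properties of lexicographic order on powersets of a totally ordered set, completing the induction.
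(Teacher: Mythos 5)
Your proof is correct and is essentially a formalization of the argument the paper leaves implicit: the paper states the lemma without proof, relying on the remark that the $\varphi_<^i$ define lexicographic orderings, and your induction --- in particular the careful backward direction showing that any witness $Z \in Y \setminus X$ forces the $<_i$-least element of $X \triangle Y$ to lie in $Y$ despite the one-sided implication --- is exactly the right way to make that precise. One minor point: your computation correctly yields $|\semm{\tau_k}{\Transsys}| = \twr{|\States|^c}{k-1}$ rather than the literal $\twr{|\States|}{k-1}$ of part~(\textsc{i}) (already off at $k=1$, where $|\semm{\tau_1}{\Transsys}| = |\States|^c$); this is an imprecision in the paper's statement rather than in your proof, and is harmless for the lemma's later use.
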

Additionally, let $\varphi^{1}_=, \varphi_=^{i}$ and $\varphi^{1}_{\text{succ}}, \varphi^{i}_{\text{succ}}$ for $i > 1$ be defined as
\begin{align*}
 \varphi^1_=(X_1,\dotsc,X_c, Y_1,\dotsc,Y_c \colon \grtype) &= \neg \varphi_<^1(X_1,\dotsc,X_c, Y_1,\dotsc,Y_c) \wedge \neg \varphi_<^1(Y_1,\dotsc,Y_c, X_1,\dotsc,X_c) \\
 \varphi^{i}_=(X,Y\colon \tau_i) &= \neg \varphi_<^i(X,Y) \wedge \neg \varphi_<^i(Y,X) \\
 \varphi^1_{\text{succ}} (X_1,\dotsc,X_c, Y_1,\dotsc,Y_c \colon \grtype)  &=  \varphi_<^1((X_1,\dotsc,X_c, Y_1,\dotsc,Y_c) \wedge \forall (Z_1,\dotsc,Z_c \colon \grtype).\,  \varphi_=^1(X_1,\dotsc,X_c, Y_1,\dotsc,Y_c) \\ & \qquad \qquad \qquad \rightarrow (\varphi_=^1(X_1,\dotsc,X_c, Z_1,\dotsc,Z_c) \vee \varphi_<^1(Z_1,\dotsc,Z_c,X_1,\dotsc,X_c)) \\
 \varphi^{i}_{\text{succ}}(X,Y\colon \tau_i) &= \varphi_<^i(X,Y) \wedge \forall(Z\colon\tau_i).\, \varphi_<(Z,Y) \rightarrow \varphi_=(X,Z) \vee \varphi_<(Z,X)  
\end{align*}
expressing equality between elements of $\semm{\tau_i}{\Transsys}$ or the fact that the second argument is the immediate successor of the first one w.r.t. the total order induced by $\varphi_<^i$.

Finally, for each $j \in \mathbb{N}$ and $i > 1$, define the formulas $\varphi^1_{=j}$, $\varphi^{i}_{=j}$ via
\begin{align*}
 \varphi^1_{=0}(X_1,\dotsc,X_c\colon \grtype) &= \forall (Y_1,\dotsc,Y_c \colon\grtype).\, \varphi_<^1(X_1,\dotsc,X_c,Y_1,\dotsc,Y_c) \vee \varphi_=^1(X_1,\dotsc,X_c,Y_1,\dotsc,Y_c) \\
\varphi^1_{=j+1}(X_1,\dotsc,X_c\colon \grtype) &= \exists (Y_1,\dotsc,Y_c \colon\grtype).\, \varphi_{=j}^1(Y_1,\dotsc,Y_c) \wedge \varphi_{\text{succ}}^1(X_1,\dotsc,X_c,Y_1,\dotsc,Y_c) \\
\varphi^{i}_{=0}(X\colon\tau_{i}) &= \forall (Y\colon\tau_i).\, \varphi_<^i(X,Y) \vee \varphi_=^{i}(X,Y) \\
\varphi^{i}_{=j+1}(X\colon\tau_{i}) &= \exists (Y\colon\tau_i).\, \varphi_{=j}^i(X,Y) \wedge \varphi_{\text{succ}}^{i}(X,Y) 
\end{align*}
where $\varphi_{=j}^0$ and $\varphi_{=j}^i$ express that $(X_1,\dotsc,X_c)$, resp.\ $X$ is the $j+1$st element of the total order induced
by $\varphi_<^{0]}$, resp.\ $\varphi_<^i$, if such an element exists. Clearly, the size of these formulas is linear
in $j$.

\subsection{The Reduction}

Let $k \geq 1$ and let $\mathcal{M} = (Q, \Sigma, \Gamma, \blank, \delta, \qin, \qyes, \qno)$ be a $\twr{k}{p(n)}$-space-bounded DTM
that decides a query $\mathcal{Q}_\mathcal{M}$ over ordered LTS, i.e., it accepts those $w = \langle \Transsys\rangle$
for which $\Transsys \in \mathcal{Q}$. W.l.o.g.\ $p(n) = c \cdot n^{c-1}$ for some $c$, whence also 
for $n \geq c$ we have $p(n) \leq n^c$. We also assume that $\mathcal{M}$ rejects all inputs that do not encode an LTS ordered by a relation $<$.

We have to build a \hopfp{k+1} formula $\varphi(X_1,\dotsc,X_d)$ such that 
$\Transsys  \models \varphi$ iff $\Transsys \in \mathcal{Q}_{\mathcal{M}}$
and $\Transsys$ is ordered by $<$.

\paragraph*{Encoding Configurations.}

Let $\tau= \grtype, \tau_{k+1}, \tau_{k+1}, \grtype$. Let $\Transsys$ be an LTS ordered by $<$ such that its state set satisfies
$|\States| \geq \max\{c, |Q|, |\Gamma| \}$. Hence, $|\States|^c \geq p(|\States|)$. 
W.l.o.g. $Q$ and $\Gamma$ are ordered, i.e., $Q = \{q_0,\dotsc,q_{|Q|-1}\}$ and $\Gamma = \{\gamma_0,\dotsc,\gamma_{|\Gamma|-1}\}$.
Since $\States$ is ordered by $<$, for each $q_i \in Q$ and for each $\gamma_j \in \Gamma$, there are unique 
states $\state_q$ and $\state_\gamma$, given as the $i+1$st, resp.\ $j+1$st states in the total order $<$.
An element of $\semm{\tau}{\Transsys}$ has the form $(\state, H, I, \state')$ with 
$\state, \state' \in \semm{\grtype}{\Transsys} = \States$ and $H,I \in \semm{\tau_{k+1}}{\Transsys}$. 
\begin{definition}
\label{def:code-config}
Let $M \in \semm{(\tau)}{\Transsys}$. We say that $M$ encodes a configuration $C = (q, h, t)$ of $\mathcal{M}$ if the following are true:
\begin{enumerate}
\item For all $(\state,H,I,\state') \in M$, we have that $\state = \state_q$.
\item For all $(\state,H,I,\state'), (\statealt,H',I',\statealt') \in M$, we have $\state = \statealt$ and $H = H'$ and $H$ is the $h+1$st element
      in the total order induced by $\varphi_<^{k+1}$.
\item For each $I \in \semm{\tau_{k+1}}{\Transsys}$, there is exactly one tuple of the form $(\state, H,I,\state')$ in $M$.
\item If $j \leq \twr{p(|\States|)}{k}$, if $I$ is the $j+1$st element in the total order induced by $\varphi_<^{k+1}$, and if
     $(\state, H,I,\state') \in M$, then $\state'= \state_\gamma$ for some $\gamma \in \Gamma$ and $t(j) = \gamma$.
\end{enumerate}
\end{definition}
The intuition here is the following: Since all tuples in $M$ agree on $\state_q$ and $H$, this uniquely determines $q$ and $h$. 
Moreover, since for each $I \in \semm{\tau_{k+1}}{\Transsys}$, there is exactly one tuple of the form $(\state, H,I,\state')$ in $M$,
this defines a function $\semm{\tau_{k+1}}{\Transsys} \to \Gamma$, and since $\semm{\tau_{k+1}}{\Transsys}$ is linearly
ordered via $\varphi_<^{k+1}$ and has cardinality $\twr{p(|\States|)}{k}$ due to Lem.~\ref{lem:order}, this yields
a function $\{0,\dotsc,\twr{|\States|^c}{k}-1\} \to \Gamma$.
Since $\mathcal{M}$ is $\twr{p(n))}{k}$-space-bounded, all configurations of a run of $\mathcal{M}$ on 
input $\langle \Transsys, (\state_1,\dotsc,\state_d)\rangle$ have a head position less than $\twr{p(|\States|)}{k} \leq 
|\semm{\tau_{k+1}}{\Transsys}|$ and, consequently, all tape cells of such a configuration with index at least $\twr{p(|\States|)}{k}$
must contain $\blank$.
Hence, such a set in $\semm{(\tau)}{\Transsys}$ can encode any configuration $\mathcal{M}$ may enter during its run on
input $\langle \Transsys, (\state_1,\dotsc,\state_d)\rangle$.

Now let $w = \langle \Transsys\rangle$. 
Consider the \hopfp{k+1} formula
\begin{align*}
\varphi^w_{\text{init}}(Y_q\colon \grtype, H \colon \tau_{k+1}, I \colon \tau_{k+1}, Y_\gamma \colon \grtype) &= 
   \varphi_{=0}^{k+1}(H) \wedge Y_q = \state_{\qin} \wedge  \bigwedge_{i = 0}^{|w|-1} \varphi_{=i}^{k+1}(I) \rightarrow Y_\gamma = \state_{w_i} \\
				 &	\quad \wedge \exists (Z\colon\tau_{k+1})\ \varphi_{=|w|-1}^{k+1}(Z) \wedge \varphi_<^{k+1}(Z,I) \rightarrow Y_\gamma = \state_\blank.
\end{align*}
It is of polynomial size and expresses that the tuple encoded in the variables $Y_q, H, I, Y_\gamma$ is in the unique set that encodes
the initial configuration
of $\mathcal{M}$ on input $w$. We use shorthand such as $Y_q = \state_{\qin}$ to abbreviate $\varphi_{=j}^1(Y_q)$ if $\qin$ is the $j+1$st
state w.r.t. $<$ on $\States$.

\paragraph*{The Partial Fixpoint.}
Consider the formula
\begin{align*}
\psi_{\text{trans}}(X, Y_q, H, I, Y_\gamma) &= \exists (Y'_{q}, H',I',Y'_{\gamma}\colon\grtype,\tau_{k+1},\tau_{k+1},\grtype).\ \\
                                            &\quad \exists (Y''_{q}, H'',I'',Y''_{\gamma}\colon\grtype,\tau_{k+1},\tau_{k+1},\grtype).\ \\
                                            &\quad X(Y'_{q}, H',I',Y'_{\gamma})
																						\wedge X(Y''_{q}, H'',I'',Y''_{\gamma}) 
																						\wedge \varphi_=^{k+1}(H',I'') \wedge \varphi_=^{k+1}(I, I')\\
																						& \wedge \neg \varphi_=^{k+1}(H,I) \rightarrow \varphi_=^{1}(Y_\gamma,Y'_\gamma) \\
																						& \wedge \bigwedge_{(q',\gamma, q'',\gamma', L) \in \delta} Y'_q = \state_{q'} \wedge Y'_{\gamma} = \state_\gamma \rightarrow Y = \state_{q''} \wedge \varphi_{\text{succ}}^{k+1}(H, H') \wedge \varphi_=^{k+1}(H,I) \rightarrow Y_\gamma = \state_{\gamma'} \\
																						& \wedge \bigwedge_{(q',\gamma, q'',\gamma', N) \in \delta} Y'_q = \state_{q'} \wedge Y'_{\gamma} = \state_\gamma \rightarrow Y = \state_{q''} \wedge \varphi_{=}^{k+1}(H, H') \wedge \varphi_=^{k+1}(H,I) \rightarrow Y_\gamma = \state_{\gamma'} \\
																						& \wedge \bigwedge_{(q',\gamma, q'',\gamma', R) \in \delta} Y'_q = \state_{q'} \wedge Y'_{\gamma} = \state_\gamma \rightarrow Y = \state_{q''} \wedge \varphi_{\text{succ}}^{k+1}(H', H) \wedge \varphi_=^{k+1}(H,I) \rightarrow Y_\gamma = \state_{\gamma'}
\end{align*}
where by abuse of syntax we write $(q',\gamma, q'',\gamma', L) \in \delta$ instead of $\delta(q',\gamma) = (q'', \gamma',L)$ etc.
\begin{lemma}
\label{lem:trans}
Assume that $M \in \semm{(\tau)}{\Transsys}$ with $\tau = (\grtype, \tau_{k+1},\tau_{k+1},\grtype)$ as before encodes some configuration
$C$ of the computation of $\mathcal{M}$ on input $w$, and assume that $M'$ of the same type encodes the successor configuration of $C$.

Let $\state, \state' \in \States$ and $M_H, M_I \in \semm{\tau_{k+1}}{\Transsys}$. Then 
\[\Transsys, \eta[X \mapsto M, Y_q \mapsto \state, H \mapsto M_H, I \mapsto I_H, Y_\gamma \mapsto \state'] \models \psi_{\text{trans}}
\quad\text{ iff }\quad (\state, M_H, M_I, \state') \in M'.\]

\end{lemma}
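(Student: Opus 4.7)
The plan is to unpack the semantics of $\psi_{\text{trans}}$ under the given assignment and match both sides of the biconditional against Def.~\ref{def:code-config} applied to the successor configuration. My first step is to extract from the two existential witnesses everything they convey about the old configuration $C = (q, h, t)$: since $M$ satisfies the encoding conditions, any witnesses $(Y'_q, H', I', Y'_\gamma), (Y''_q, H'', I'', Y''_\gamma) \in M$ must have $Y'_q = Y''_q = \state_q$ and $H' = H''$ equal to the encoding of $h$. The explicit equalities $\varphi_=^{k+1}(H', I'')$ and $\varphi_=^{k+1}(I, I')$ then pin down $I''$ as the encoding of $h$, forcing $Y''_\gamma = \state_{t(h)}$ (the symbol currently scanned), while $I' = I$ forces $Y'_\gamma = \state_{t(j)}$, where $I$ encodes some position $j$.

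For the ($\Leftarrow$) direction I would assume $(\state, M_H, M_I, \state') \in M'$, which decodes to $\state = \state_{q''}$, $M_H$ encoding the new head position $h'$, $M_I$ encoding some position $j$, and $\state' = \state_{t'(j)}$. As witnesses I take the unique tuples of $M$ whose third components are $M_I$ and the encoding of $h$, so that the old state and the two symbols $t(h)$ and $t(j)$ become accessible. Each conjunct of $\psi_{\text{trans}}$ is then verified in turn: the no-change implication captures that cells away from the head retain their content, and each of the three large conjunctions over $\delta$ encodes, for one head-movement direction $D \in \{L, N, R\}$, the correct update of state, head location, and the symbol written at the old head cell.

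For the ($\Rightarrow$) direction, I would conversely show that if $\psi_{\text{trans}}$ is satisfied then $(\state, M_H, M_I, \state')$ must lie in $M'$. Since $\mathcal{M}$ is deterministic, exactly one entry $(q, t(h), q'', \gamma', D) \in \delta$ applies, and the matching conjunct forces $Y_q = \state_{q''}$ together with the appropriate constraint relating $H$ to $H'$ via $\varphi_{\text{succ}}^{k+1}$ or $\varphi_=^{k+1}$; the head-cell implication and the no-change conjunct jointly determine $Y_\gamma$ for every index $I$. Matching the result against Def.~\ref{def:code-config} for the successor configuration then yields membership in $M'$. The degenerate case $q \in \{\qyes, \qno\}$, where no successor configuration exists, is handled separately by observing that the hypothesis of the lemma then fails vacuously.

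The main obstacle is essentially bookkeeping: one must track meticulously which primed variable plays which role (old head position $H' = H''$, old symbol under the head $Y''_\gamma$, old symbol at the queried index $Y'_\gamma$) and how these relate to the new data $(Y_q, H, I, Y_\gamma)$ in the output tuple, especially once the head has moved. The case split on $D$ itself is routine, but the delicate point is to ensure that the ``updated'' head cell and the ``unchanged'' cells partition the index space consistently with the direction of movement, so that the no-change conjunct and the head-cell implications agree rather than conflict on the boundary case where $I$ coincides with the old head position.
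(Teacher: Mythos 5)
Your proposal follows essentially the same route as the paper's own justification, which is likewise a direct unpacking of the semantics of $\psi_{\text{trans}}$: the first witness supplies the old tape content at the queried position $I$, the second supplies the symbol under the old head, the no-change conjunct handles cells away from the head, and the three $\delta$-indexed conjunctions handle the state/head/written-symbol update per movement direction, matched against Def.~\ref{def:code-config} for the successor configuration. Your additional explicit treatment of the degenerate case $q \in \{\qyes,\qno\}$ is a small but welcome refinement over the paper's sketch.
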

The intuition here is that  $\psi_{\text{trans}}$ defines the encoding of a successor of some configuration $C$ from the encoding of $C$ itself.
The first existential quantifier requires the existence of a tuple in $X$ that encodes the value of the tape at the same position as 
the new tuple will, i.e., they both must have the same third component, and the second quantifier requires the existence of a tuple that encodes
the content of the tape at the head position. The third line enforces these properties. The fourth line fixes tape contents not under the head.
The last three lines, separated for the ease of notation, enforce that both the state transition and the new content of the tape at the old
head position obey the transition function.

We now have the required machinery to encode a computation of $\mathcal{M}$ on input $w$ into \hopfp{k+1}. Let
\begin{align*}
\psi(X,Y_q,H,I,Y_\gamma) &= \varphi_{\text{trans}}(X, Y_q, H, I, Y_\gamma)   \\ 
                 & \qquad \qquad \vee\forall(Y'_q, H', I', Y'_\gamma \colon \grtype, \tau_{k+1},\tau_{k+1},\grtype).\, \neg X(Y'_q,H',I',Y'_\gamma) 
                                                  \wedge \varphi_{init}^w(Y_q, H, I, Y_\gamma) \\
\varphi_{\mathcal{M}} &=  \varphi^< \wedge \exists (Y'_q, H', I', Y'_\gamma \colon \grtype, \tau_{k+1},\tau_{k+1},\grtype).\ Y'_q = \state_{\qyes} \wedge (\pfp (X\colon(\tau).\, \psi)(Y'_q,H',I',Y'_\gamma).
\end{align*}
where $\varphi^<$ expresses that $<$ is a total order.
\begin{lemma}
\label{lem:pfp}
Let $p(n) = c n^{c-1}$ and let $\mathcal{M}$ be a $\twr{p(n)}{k}$-space-bounded DTM that decides a query over ordered LTS.
Let $Q$ be its state set and let $\Gamma$ be its tape alphabet.
Let $\Transsys$ be an LTS ordered by $<$ and such that its state set satisfies $|\States| \geq \max\{|Q|,|\Gamma|, c\}$. 
Let $w = \langle \Transsys\rangle$. Then 
\[
\Transsys \models \varphi_{\mathcal{M},w} \text{ iff } w \in L(\mathcal{M}).
\]
\end{lemma}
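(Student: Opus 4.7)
The plan is to show by induction on the stages of the partial fixpoint that, for every $i \geq 0$ before the computation halts, $F^{i+1}$ is exactly the set that encodes the $i$-th configuration $C_i$ of the unique computation of $\mathcal{M}$ on $w$, in the sense of Def.~\ref{def:code-config}. For the base case $i = 0$ we have $F^0 = \emptyset$, so every atom of the form $X(\cdots)$ in $\psi_{\text{trans}}$ fails and hence its leading existential is unsatisfiable. The second disjunct of $\psi$ then reduces to $\varphi_{\text{init}}^w(Y_q, H, I, Y_\gamma)$, since its universal quantifier holds vacuously on empty $X$. By construction $\varphi_{\text{init}}^w$ picks out precisely the tuples of the encoding of the initial configuration $C_0$, so $F^1$ encodes $C_0$.

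For the inductive step, assume $F^{i+1}$ encodes $C_i$ and $C_i$ has a successor $C_{i+1}$. Since $F^{i+1} \neq \emptyset$, the second disjunct of $\psi$ is false everywhere, so the next stage is determined entirely by $\psi_{\text{trans}}$. Lem.~\ref{lem:trans} then gives that $F^{i+2}$ is the encoding of $C_{i+1}$. Note that Lem.~\ref{lem:order} ensures that the index type $\tau_{k+1}$ is large enough to address every tape cell $\mathcal{M}$ may touch within its $\twr{p(|\States|)}{k}$-space bound, so the encoding of Def.~\ref{def:code-config} remains well defined at each stage.

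Since $\mathcal{M}$ does not diverge on $w$, its computation reaches either the accepting or the rejecting configuration at some step $n$. We assume without loss of generality that $\mathcal{M}$ has been modified to have self-loops in its halting states, i.e.\ $\delta(\qyes, \gamma) = (\qyes, \gamma, N)$ and $\delta(\qno, \gamma) = (\qno, \gamma, N)$ for all $\gamma \in \Gamma$; this does not change $L(\mathcal{M})$ but makes $C_n$ its own successor. Then Lem.~\ref{lem:trans} yields $F^{n+2} = F^{n+1}$, so the partial fixpoint stabilises and is precisely the encoding of $C_n$. Inspection of the first coordinate shows that a tuple with $Y'_q = \state_{\qyes}$ lies in the fixpoint iff $C_n$ is the accepting configuration iff $w \in L(\mathcal{M})$. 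Since $\Transsys$ is ordered by $<$ by hypothesis, the conjunct $\varphi^<$ of $\varphi_{\mathcal{M}}$ is satisfied, and we obtain $\Transsys \models \varphi_{\mathcal{M}}$ iff $w \in L(\mathcal{M})$.

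The main obstacle is the stabilisation of the iteration at the halting step: PFP semantics returns $\emptyset$ whenever no fixpoint is reached, so without the self-loop convention (or an equivalent additional clause in $\psi_{\text{trans}}$) the iteration would step off the halting configuration and the partial fixpoint would fail to capture it. Once that convention is granted, the remaining argument reduces to routine applications of Lem.~\ref{lem:trans} and Def.~\ref{def:code-config} along the finitely many stages of the computation.
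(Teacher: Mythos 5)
Your proof is correct and follows essentially the same route as the paper: the paper's own argument is a brief appeal to the same staged induction, with $\varphi^w_{\text{init}}$ producing the encoding of $C_0$ at the first stage and Lem.~\ref{lem:trans} driving each subsequent stage, concluding by testing for a $\qyes$-tuple in the fixpoint.

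Where you go beyond the paper is the stabilisation issue: the PFP semantics returns $\emptyset$ unless the iteration reaches an actual fixpoint, so the argument really does need the encoding of the halting configuration to be a fixpoint of the operator induced by $\psi$. The paper leaves this implicit; your self-loop convention $\delta(\qyes,\gamma) = (\qyes,\gamma,N)$ (and likewise for $\qno$) is the standard and correct way to guarantee it, and it is harmless for $L(\mathcal{M})$. One small wrinkle: once you adopt that convention you cannot literally cite Lem.~\ref{lem:trans} for the final step, since the paper's notion of successor configuration explicitly excludes configurations in halting states (condition (\textsc{i}) there requires $q \notin \{\qyes,\qno\}$), so a halting configuration still has no successor in the formal sense. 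You should instead observe directly that $\psi_{\text{trans}}$ only consults $\delta$ and never tests whether the encoded state is halting, so with the self-loops it maps the encoding of the halting configuration to itself; this is a one-line remark and does not affect the soundness of your argument.
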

This follows from the previous lemmas. $\psi$ stipulates that either $X$ is empty, and a tuple is in its ``return value'' iff
it is in the encoding of the initial configuration, using $\varphi^w_{\text{init}}$, or it defers to $\varphi_{\text{trans}}$. The formula
$\varphi_{\mathcal{M}}$ then encodes the unique run of $\mathcal{M}$ on input $w$, by asking whether a tuple containing the accepting state 
is contained in the partial fixpoint of $\psi$. This is the case if and only if the machine halts in the accepting state, due to Lem.~\ref{lem:trans} 
and our observations on $\varphi^w_{\text{init}}$.

We omit the tedious, but standard argument that $\varphi_{\text{init}}^w$ can be rewritten into some $\varphi_{\text{init}}$ not depending on
$w$ that internalizes the translation from $\Transsys$ to $\langle \Transsys \rangle$.

\begin{theorem}
\hopfp{k+1} captures $k$-\ExpSpace over ordered LTS for $k \geq 2$.
\end{theorem}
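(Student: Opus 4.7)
The plan is to prove both directions of the capturing claim separately; both rely entirely on results already established in the paper, with the main work having been done in Section~\ref{sec:lower}.

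For the direction that every \hopfp{k+1}-query is a $k$-\ExpSpace-query, I would simply instantiate Thm.~\ref{thm:upper} at order $k+1$. That theorem gives that evaluating any closed \hopfp{k+1}-formula on a given LTS can be done in $k$-fold exponential space, so such an evaluator, viewed as a deciding DTM, witnesses that the induced query lies in $k$-\ExpSpace.

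For the converse, I would take an arbitrary $k$-\ExpSpace-query $\mathcal{Q}$ and a DTM $\mathcal{M}$ deciding it within $\twr{p(n)}{k}$ space for some polynomial $p$. After a harmless padding step I would assume $p(n) = c \cdot n^{c-1}$ for some $c$, matching the shape used in Section~\ref{sec:lower}. Lem.~\ref{lem:pfp} then supplies an \hopfp{k+1}-formula $\varphi_{\mathcal{M}}$ such that, on every ordered LTS $\Transsys$ with $|\States| \geq \max\{|Q|,|\Gamma|,c\}$, one has $\Transsys \models \varphi_{\mathcal{M}}$ iff $\langle\Transsys\rangle \in L(\mathcal{M})$, hence iff $\Transsys \in \mathcal{Q}$. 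Two organizational steps then close the argument. First, the input encoding must be internalized: $\varphi_{\mathcal{M}}$ was built from $\varphi^w_{\text{init}}$, which mentions the concrete string $w = \langle\Transsys\rangle$. Since a fixed polynomial encoding $\langle\cdot\rangle$ of an ordered LTS is first-order definable from $\Transsys$ and the order $<$, each clause of the form ``$Y_\gamma = \state_{w_i}$'' can be replaced by a first-order case distinction over $\Transsys$ that produces the symbol at position $i$, yielding a closed formula no longer mentioning any particular $w$. Second, for the finitely many ordered LTS with fewer than $\max\{|Q|,|\Gamma|,c\}$ states (up to isomorphism), membership in $\mathcal{Q}$ can be precomputed, and the ``yes'' instances can be captured by a disjunction of first-order isomorphism-type formulas that quantify over states in the order induced by $<$ and spell out every proposition and edge. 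Disjoining these, guarded by a first-order size test, with $\varphi_{\mathcal{M}}$ gives a single closed \hopfp{k+1}-formula capturing $\mathcal{Q}$ on all ordered LTS.

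The main technical content of the theorem, namely the construction and correctness of the partial-fixpoint subformula $\psi$ encoding one step of $\mathcal{M}$, has already been discharged by Lem.~\ref{lem:trans} and Lem.~\ref{lem:pfp}, so the proof above is essentially bookkeeping. The one point needing care is to check that neither clean-up rewrite raises the order of the formula above $k+1$: the internalization of the encoding is first-order in $\Transsys$ and $<$, and the small-LTS disjunction is also first-order, so both fit comfortably below the \hopfp{k+1} ceiling.
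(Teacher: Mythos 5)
Your proposal is correct and follows essentially the same route as the paper: the upper bound by instantiating Thm.~\ref{thm:upper}, and the lower bound by Lem.~\ref{lem:pfp} together with internalizing the encoding $\langle\Transsys\rangle$ into $\varphi_{\text{init}}$ and enumerating the finitely many small ordered LTS in a fixed formula. You merely spell out in more detail the two clean-up steps that the paper declares standard and omits.
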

One direction is by Thm.~\ref{thm:upper}, the other direction is by the previous Lem.~\ref{lem:pfp} plus the observation that LTS
that are smaller than in the requirements of the lemma can be enumerated in a constant-size formula.

\section{Conclusion}
\label{sec:concl}
We have shown that, over ordered structures, the queries expressible in $\hopfp{k+1}$ are exactly those decided
by a $\twr{p(n)}{k}$-space-bounded DTM, i.e., that $\hopfp{k+1}$ captures $k$-\ExpSpace over ordered structures for
$k \geq 0$, extending the same result by Vardi for $k=0$ \cite{DBLP:conf/stoc/Vardi82}. 

It should be noted that the requirement that the structures in question be ordered can be removed for $k \geq 1$,
as $\hopfp{2}$ and above possess sufficient expressive power to ``guess'' an order, cf.\ Fagin's Theorem \cite{Fagin74}.

Our result has applications in descriptive complexity. Otto's Theorem \cite{Otto99} characterizes
\emph{bisimulation-invariant} \PTime-queries as exactly those expressible in the polyadic modal $mu$-calculus. Contrary
to Immerman's and Vardi's characterization \cite{DBLP:journals/iandc/Immerman86, DBLP:conf/stoc/Vardi82} of PTIME, the crucial
requirement that the LTS be ordered is absent from this result, since an order can be recuperated in the bisimulation-invariant 
setting. However, the result makes use of the Immerman-Vardi Theorem.
We have extended this result to a characterization of bisimulation-invariant $k$-\ExpTime \cite{DBLP:journals/corr/abs-2209-10311}
using Freire and Martin's characterization of $k$-\ExpTime \cite{FREIRE201171}, i.e., their generalization of the Immerman-Vardi Theorem. 
The results of this paper open
up a similar characterization of the bisimulation-invariant exponential-space hierarchy, following from the second author's
Master's thesis \cite{KronenbergerMSc19}.

\bibliographystyle{eptcs}
\bibliography{biblio.bib}
\end{document}